\newtheorem{theorem}{Theorem}
\newtheorem{lemma}{Lemma}
\newtheorem{remark}{Remark}
\newtheorem{definition}{Definition}
\newtheorem{corollary}{Corollary}
\newtheorem{conjecture}{Conjecture}
\newcommand{\EE}{\mathbb{E}}
\renewcommand{\Pr}{\mathscr{Pr}}
\begin{document}

\title{Which Boolean Functions are Most Informative?}

\author{\IEEEauthorblockN{Gowtham R. Kumar and Thomas A. Courtade}
\IEEEauthorblockA{Department of Electrical Engineering\\
Stanford University\\
Stanford, California, USA\\
Email: \{gowthamr, courtade\}@stanford.edu}}
\maketitle

\begin{abstract}
We introduce a simply stated conjecture regarding the maximum mutual information a Boolean function can reveal about noisy inputs.  Specifically, let $X^n$ be i.i.d.\ Bernoulli($1/2$), and let $Y^n$ be the result of passing $X^n$ through a memoryless binary symmetric channel with crossover probability $\alpha$.  For any Boolean function $b:\{0,1\}^n\rightarrow \{0,1\}$, we conjecture that $I(b(X^n);Y^n)\leq 1-H(\alpha)$. While the conjecture remains open, we provide substantial evidence supporting its validity.
\end{abstract}

\section{Introduction}\label{sec:Intro}
This paper is inspired by the following conjecture:
\begin{conjecture}\label{conj:Ib_Yn}
Let $X^n$ be i.i.d.\ Bernoulli(1/2), and let $Y^n$ be the result of passing $X^n$ through a memoryless binary symmetric channel with crossover probability $\alpha$.  For any Boolean function $b: \{0,1\}^n\rightarrow \{0,1\}$, we have
\begin{align}
I(b(X^n);Y^n)\leq 1-H(\alpha).\label{eqn:conjUB}
\end{align}
\end{conjecture}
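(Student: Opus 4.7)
My plan is to attempt induction on $n$, treating the base case $n=1$ first. For $n=1$, data processing gives $I(b(X);Y) \leq I(X;Y) = 1 - H(\alpha)$ immediately, with equality attained only by the dictators $b(x)=x$ and their negations. This already exposes the tightness of the bound and identifies the real difficulty: showing that no $n$-variable Boolean function can outperform what a single coordinate already does.

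For the inductive step, given $b: \{0,1\}^n \to \{0,1\}$, I would form the two slice functions $b_0(x^{n-1}) := b(x^{n-1}, 0)$ and $b_1(x^{n-1}) := b(x^{n-1}, 1)$ and decompose $I(b(X^n); Y^n)$ via the chain rule into contributions from the two slices (each of which the inductive hypothesis bounds by $1-H(\alpha)$) together with correction terms involving $X_n$, $Y_n$, and the residual dependence of $b(X^n)$ on the last coordinate. The crux is to show that these correction terms cannot inflate the total past $1 - H(\alpha)$. A complementary angle is the Fourier-analytic reformulation: setting $f := \mathbf{1}[b=1]$, the conditional probability $\Pr(b(X^n)=1 \mid Y^n = y)$ equals $(T_\alpha f)(y)$ for the Bonami--Beckner noise operator $T_\alpha$, so the conjecture becomes the functional inequality
\begin{equation*}
H(\EE[f]) - \EE\bigl[H\!\left((T_\alpha f)(Y^n)\right)\bigr] \leq 1 - H(\alpha),
\end{equation*}
which is ripe for attack via hypercontractivity, which provides sharp $L^p$-control over $T_\alpha f$.

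The principal obstacle is the tightness of $1-H(\alpha)$: it is exactly the single-letter BSC capacity, attained only by the dictators. Standard tools fall short --- data processing gives $n(1-H(\alpha))$; the BSC strong data processing constant gives $I(b;Y^n) \leq (1-2\alpha)^2 H(b) \leq (1-2\alpha)^2$; and Mrs.\ Gerber's Lemma also yields, asymptotically, only a constant on the order of $(1-2\alpha)^2$. Since $(1-2\alpha)^2 > 1-H(\alpha)$ for every $\alpha \in (0,1/2)$, a successful proof must exploit the $\{0,1\}$-valued constraint on $b$ more finely than any of these off-the-shelf inequalities. I expect the decisive step to be either the cross-term estimate in the inductive decomposition or a delicate analysis of when $T_\alpha f$ concentrates on the pair $\{\alpha, 1-\alpha\}$; in the worst case, the conjecture may demand a genuinely new ingredient that couples the Boolean constraint to the non-polynomial structure of the binary entropy $H(\cdot)$.
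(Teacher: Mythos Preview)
The statement you are trying to prove is labeled a \emph{conjecture} in the paper, and the paper explicitly leaves it open: there is no proof of Conjecture~\ref{conj:Ib_Yn} to compare against. What the paper does instead is split the conjecture into two sub-claims --- that lex functions minimize $H(b(X^n)\mid Y^n)$ among all Boolean functions with the same $|b^{-1}(0)|$ (Conjecture~2), and that lex functions themselves obey \eqref{eqn:conjUB} (Conjecture~3) --- and then makes partial progress on each via compression operators and an algorithmic verification for fixed $\alpha$. Neither sub-conjecture is settled in full.

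Your proposal is not a proof either; it is an outline of approaches together with an honest inventory of why each falls short. The inductive scheme you sketch has a genuine gap precisely where you locate it: the ``correction terms'' coupling $b_0$, $b_1$, $X_n$, and $Y_n$ are the entire difficulty, and you give no mechanism for bounding them. The paper's own remarks reinforce this: it notes that the Fourier-analytic route can establish the weaker $\sum_i I(b(X^n);Y_i)\le 1-H(\alpha)$ but ``appears incapable'' of the full statement, and that ``standard information-theoretical manipulations appear incapable of establishing \eqref{eqn:conjUB}.'' Your closing diagnosis --- that a proof likely needs a new idea coupling the Boolean constraint to the non-polynomial structure of $H(\cdot)$ --- is consistent with the paper's stance, but neither you nor the paper supplies that idea. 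In short: the target is open, your plan does not close it, and the paper's route (reduction to lex functions plus compression and algorithmic checks) is methodologically quite different from both your induction and your hypercontractivity angles.
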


At first sight, Conjecture \ref{conj:Ib_Yn}  might appear suitable as a homework exercise for a first course on information theory.  However, over the course of this paper, we hope to convince the reader that the conjecture is much deeper than it appears.  Despite its apparent simplicity, standard information-theoretical manipulations appear incapable of establishing \eqref{eqn:conjUB}.

To the present authors, Conjecture \ref{conj:Ib_Yn} represents the simplest, nontrivial
embodiment of Boolean functions in an information-theoretic context. In words, Conjecture \ref{conj:Ib_Yn} asks: ``\emph{What is the most significant bit that $X^n$ can provide  about $Y^n$?}"

Despite their fundamental roles in   computer science and digital computation, Boolean functions have received relatively little attention from the information theory community.  The recent work \cite{bib:Klotz2012} is perhaps most relevant to our Conjecture \ref{conj:Ib_Yn} and provides compelling motivation for its study.  In \cite{bib:Klotz2012}, the authors  prove that for $n$ and $\Pr\{b(X^n)=0\}\geq 1/2$ fixed, $I(b(X^n);X_1)$ is maximized by functions $b$ which satisfy $b(X^n)=0$ whenever $X_1=0$ (i.e., when $b$ is \emph{canalizing} in $X_1$).  The motivation for considering this problem comes from computational biology, where Boolean networks are used to model dependencies in various regulatory networks.  We encourage the reader to refer to \cite{bib:Klotz2012, bib:Samal2008} and the references therein for further information.

We remark that we have proven   the weaker inequality
\begin{align}
\sum_{i=1}^n I(b(X^n);Y_i) \leq 1-H(\alpha) \label{eqn:weakConjectureSum}
\end{align}
using Fourier-analytic techniques similar to those employed in \cite{bib:Klotz2012}.  However, this Fourier-analytic approach appears incapable of  establishing the stronger statement of Conjecture \ref{conj:Ib_Yn}.  We omit the proof of \eqref{eqn:weakConjectureSum} in this paper due to space constraints.

Conjecture \ref{conj:Ib_Yn} is also  related  to the \emph{Information Bottleneck Method} \cite{bib:Tishby1999}, which attempts to solve the optimization problem
\begin{align}
\min_{p(u|x^n)} I(X^n;U)-\lambda I(Y^n;U).
\end{align}
For a given $\lambda>0$, the optimizing $U$ is purportedly the best tradeoff between the accuracy of describing $Y^n$ and the descriptive complexity of $U$.  In our setting, $b(X^n)$ plays the role of $U$, and we constrain the descriptive complexity to be at most one bit.  It is relatively easy to show that randomized Boolean functions do not yield a higher mutual information.  Thus, expressing Conjecture \ref{conj:Ib_Yn} in terms of deterministic Boolean functions comes without loss of generality.

A more concrete example comes in the context of gambling.  To this end, suppose $Y^n$ is a  simple model for a market of $n$ stocks, where each stock doubles in value or goes bankrupt with probability 1/2, independent of all other stocks.  If an oracle has access to side information $X^n$, and we are allowed to ask  \emph{one} yes/no question of the oracle, which question should we ask to maximize the  rate at which our wealth grows?  The validity of Conjecture 1 would imply that we should only concern ourselves with the performance of a single stock; say $Y_1$.  This is readily seen as a consequence of known results on gambling with side information \cite[Theorem 6.2.1]{bib:CoverThomas2006}, since putting $b(X^n)=X_1$ yields
\begin{align}
I(b(X^n);Y^n) = I(X_1;Y^n) = I(X_1;Y_1)=1-H(\alpha),
\end{align}
hence the conjectured upper bound \eqref{eqn:conjUB} is attainable and represents the maximum possible increase in doubling rate.

Finally, we point out that \eqref{eqn:conjUB} is related in spirit to the notion of \emph{average sensitivity} of Boolean functions.  This topic has received a great deal of attention in the computer science literature (cf. \cite{bib:O'Donnell2008}).  To see the connection to sensitivity, note that \eqref{eqn:conjUB} can be rewritten as
\begin{align}
H(b(X^n)|Y^n) \geq H(b(X^n))-1+H(\alpha). \label{eqn:sensitivity}
\end{align}
For fixed $\Pr\{b(X^n)=0\}$, the right hand side of \eqref{eqn:sensitivity} is constant.  Hence, the conjecture essentially lower bounds the output uncertainty  of Boolean functions with respect to noisy inputs.

This paper is organized as follows.  Section \ref{sec:results} provides a summary of the main results and their implications. It includes a refinement  of Conjecture \ref{conj:Ib_Yn} by splitting it into two ``sub-conjectures."  The following section deals with the proofs of the main results. Section \ref{sec:conclusion} delivers  concluding remarks.

\section{Results and Implications}\label{sec:results}
Let $X^n$ be a sequence of i.i.d.~$\mbox{Bernoulli}\left(1/2\right)$ random variables, $Z^n$ be a sequence of i.i.d. $\mbox{Bernoulli}\left(\alpha \right)$ random variables independent of $X^n$, $0\leq\alpha\leq 1/2$. Let $Y^n = X^n \oplus Z^n$, where ``$\oplus$" denotes coordinate-wise XOR.  Throughout, we let $\Omega=\{0,1\}$, $\Omega_n=\{0,1\}^n$, and consider Boolean functions  $b:\Omega_n\rightarrow \Omega$.

\begin{definition}
The lexicographical ordering $\prec_L$ on $\{0,1\}^k$ is defined as follows:  $x^k\prec_L \tilde{x}^k$ iff $x_j < \tilde{x}_j$ for some $j$ and $x_i = \tilde{x}_i$ for all $i<j$.
\end{definition}
For example, if $k=3$, we have $000 \prec_L 001 \prec_L 010 \prec_L 011 \prec_L 100 \prec_L 101 \prec_L 110 \prec_L 111$.

\begin{definition}
We define $L_k(M)$ to be the initial segment of size $M$ in the lexicographical ordering on $\{0,1\}^k$.  For example, $L_3(4) = \{000,001,010,011\}$.
\end{definition}

For a function $b: {\Omega_n}\rightarrow \Omega$, we say that ``$b$ is \emph{lex}'' when $b^{-1}(0)=L_n(|b^{-1}(0)|)$.  In other words, $b$ is lex when it maps an initial segment of the lexicographical order to $0$, and the complement segment to $1$.

Instead of dealing with Conjecture \ref{conj:Ib_Yn} directly, consider the following two conjectures:

\begin{conjecture} \label{conj:comp}
For a given $n$ and fixed cardinality $|b^{-1}(0)|$, the conditional entropy $H(b(X^n)|Y^n)$ is minimized when $b$ is lex.
\end{conjecture}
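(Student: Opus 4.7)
My plan is to prove Conjecture~\ref{conj:comp} via a sequence of combinatorial compressions of $A := b^{-1}(0)$, each preserving $|A|$ and weakly decreasing $H(b(X^n)|Y^n)$, jointly driving any such $A$ to $L_n(|A|)$. The starting point is to rewrite
\begin{equation*}
H(b(X^n)|Y^n) = 2^{-n}\sum_{y^n \in \Omega_n} h(f_A(y^n)), \quad f_A(y^n) := \sum_{x \in A}\alpha^{d(x,y^n)}(1-\alpha)^{n-d(x,y^n)},
\end{equation*}
where $d$ is Hamming distance and $h$ is binary entropy, so the problem becomes minimizing $\sum_{y^n} h(f_A(y^n))$ subject to $|A|=M$.

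The primary tool is the single-coordinate shift $S_i$: for each pair $\{x, x \oplus e_i\}$ with $x \in A$, $x_i=1$, and $x \oplus e_i \notin A$, swap $x$ out for $x \oplus e_i$. Fixing the $n-1$ coordinates of $y^n$ other than position $i$, one checks that the two values of $f_A$ at $y_i=0$ and $y_i=1$ have sum invariant under $S_i$ but weakly increased absolute difference. By strict concavity of $h$, the quantity $h(a)+h(b)$ is decreasing in $|a-b|$ at fixed $a+b$, so $S_i$ does not increase $\sum_{y^n} h(f_A(y^n))$. The same ``fixed sum, increasing spread'' mechanism handles the pairwise $(i,j)$-compression (for $i<j$) that rearranges $(x_i,x_j)$-slices to favor $(0,1)$ over $(1,0)$: grouping the $y^n$-sum by $(y_i,y_j)$, two of the four resulting subsums are manifestly preserved under the compression and the other two decrease in the spread of the relevant $f_A$-values.

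The main obstacle is that invariance under all coordinate shifts and pairwise $(i,j)$-compressions does \emph{not} force $A$ to be lex. For example, in $\Omega_3$ the Hamming ball $\{000,001,010,100\}$ is invariant under every such compression, yet $L_3(4) = \{000,001,010,011\}$; direct computation (e.g.\ for $\alpha=1/4$) confirms that the lex set strictly beats the ball. So the proof must invoke larger rearrangements, such as $k$-coordinate lex-compressions that, on each fiber obtained by fixing $n-k$ coordinates, replace $A$'s restriction to the remaining $2^k$ strings by its lex initial segment. Proving each $k$-compression is entropy-nondecreasing requires a multivariate analogue of the above inequality for sums of $h$ at $2^k$ convex combinations; for $k=n$ this is exactly the conjecture itself, suggesting an induction on $k$ (or equivalently on $n$). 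After a coord-$1$ shift ensuring $A_1 \subseteq A_0$ (using the standard decomposition $A = (\{0\}\times A_0) \cup (\{1\} \times A_1)$), one obtains $f_A(0,y^{n-1}) = f_{A_1}(y^{n-1}) + (1-\alpha)f_B(y^{n-1})$ and $f_A(1,y^{n-1}) = f_{A_1}(y^{n-1}) + \alpha f_B(y^{n-1})$ with $B := A_0 \setminus A_1$, reducing the problem to a joint minimization over nested pairs $(A_1,B) \subset \Omega_{n-1}$. This coupled rearrangement does not follow directly from the $(n-1)$-dimensional case, and handling it appears to be the genuine crux of the difficulty.
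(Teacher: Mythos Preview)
The statement you are attempting to prove is labeled a \emph{conjecture} in the paper and is not proved there. The paper's own contribution toward it is precisely the partial progress you recover: it shows (Lemma~\ref{lem:1CompressionHelps} and Theorem~\ref{thm:2CompressionHelps}) that $\mathcal{I}$-compressions with $|\mathcal{I}|\le 2$---your single-coordinate shifts $S_i$ and pairwise $(i,j)$-compressions---weakly decrease $H(b(X^n)\mid Y^n)$, and then verifies the conjecture numerically for $n\le 7$ by enumerating the class $\mathcal{S}_n$ of sets invariant under all such compressions. Your Hamming-ball example $\{000,001,010,100\}$ is exactly the kind of non-lex element of $\mathcal{S}_n$ that shows why this does not finish the job; so far your analysis and the paper's agree.

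The genuine gap is in your proposed remedy. You suggest closing the argument by showing that $k$-coordinate lex-compressions are conditional-entropy-nonincreasing for larger $k$ and inducting on $k$. The paper explicitly reports counterexamples to this: there exist sets $B$ and index sets $\mathcal{I}$ with $|\mathcal{I}|>2$ for which the $\mathcal{I}$-compression \emph{increases} $H(b(X^n)\mid Y^n)$, even though the full $|\mathcal{I}|=n$ compression still appears to decrease it. Hence the ``multivariate analogue'' you would need---monotonicity of $\sum_y h(f_A(y))$ under every intermediate lex-compression---is simply false, and no induction through intermediate $k$ can succeed. Your honest concession that the coupled $(A_1,B)$ rearrangement ``appears to be the genuine crux of the difficulty'' is correct, but the specific program you outline to attack it is blocked by known counterexamples; the conjecture remains open.
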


\begin{conjecture}\label{conj:bLexLB}
If $b:\Omega_n\rightarrow \Omega$ is lex, then
\begin{align}
H(b(X^n)|Y^n) \geq H(b(X^n))H(\alpha).\label{eqn:bLexLB}
\end{align}
\end{conjecture}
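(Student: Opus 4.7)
\emph{Plan: induction on $n$.} The base case $n=1$ is immediate: the only nontrivial lex function is $b(x_1)=x_1$, and both sides of \eqref{eqn:bLexLB} equal $H(\alpha)$. The natural attack is therefore induction on $n$.

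\emph{Setting up the recursion.} Replacing $b$ by $\tilde b(x^n):=1-b(1\oplus x_1,\dots,1\oplus x_n)$ yields another lex function (with $2^n-M$ zero-inputs), and the symmetry of the binary symmetric channel under coordinatewise complementation preserves both $H(b(X^n))$ and $H(b(X^n)|Y^n)$, so we may assume $M:=|b^{-1}(0)|\le 2^{n-1}$. In this regime, the lex order forces
\begin{equation*}
b(X^n)=X_1\vee b'(X_2^n),
\end{equation*}
where $b'$ is lex on $n-1$ bits with $|(b')^{-1}(0)|=M$. Write $B:=b(X^n)$, $B':=b'(X_2^n)$, and $s:=\Pr\{B'=0|Y_2^n\}$. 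Because $(X_1,Y_1)$ is independent of $(X_2^n,Y_2^n)$, the variables $X_1$ and $B'$ are conditionally independent given $Y^n$, so $\Pr\{B=0|Y^n\}=\Pr\{X_1=0|Y_1\}\cdot s$ and hence
\begin{equation*}
H(B|Y^n)=\tfrac{1}{2}\,\EE\bigl[h(\alpha s)+h((1-\alpha)s)\bigr].
\end{equation*}
Setting $\mu:=\EE[s]=M/2^{n-1}$ gives $H(B')=h(\mu)$ and $H(B)=h(\mu/2)$. The inductive hypothesis applied to $b'$ yields $\EE[h(s)]\ge h(\mu)H(\alpha)$, and the statement to prove collapses to the functional inequality
\begin{equation*}
\tfrac{1}{2}\,\EE\bigl[h(\alpha s)+h((1-\alpha)s)\bigr]\;\ge\; h(\mu/2)\,H(\alpha).\qquad(\star)
\end{equation*}

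\emph{Main obstacle.} Inequality $(\star)$ is the crux. Writing $\phi(u):=\tfrac{1}{2}[h(\alpha u)+h((1-\alpha)u)]$, both $\phi$ and $h$ are concave on $[0,1]$, so Jensen gives only an \emph{upper} bound on $\EE[\phi(s)]$, in the wrong direction. A pointwise estimate of the form $\phi(u)\ge h(u/2)H(\alpha)$ is likewise not enough by itself: even granting it, the concavity of $u\mapsto h(u/2)$ produces $\EE[\phi(s)]\ge H(\alpha)\EE[h(s/2)]\le H(\alpha)h(\mu/2)$, again the wrong direction. Any proof of $(\star)$ must therefore genuinely use the entropy constraint $\EE[h(s)]\ge h(\mu)H(\alpha)$, or additional structural information about $s$ coming from the lex $b'$.

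\emph{Routes to close the gap.} The first attack I would try is a Lagrangian / extremal analysis: minimize $\EE[\phi(s)]$ over all $[0,1]$-valued $s$ satisfying $\EE[s]=\mu$ and $\EE[h(s)]\ge h(\mu)H(\alpha)$. Standard convex-analytic arguments force the extremal $s$ to be supported on at most two atoms, reducing $(\star)$ to a two-parameter inequality amenable to analytic or (if need be) numerical verification. A second, more structural route relies on the algebraic identity
\begin{equation*}
H(B|Y^n)-H(B)H(\alpha)=\tfrac{1}{2}\bigl[H(B'|Y_2^n)-H(B')H(\alpha)\bigr]+\bigl[H(\alpha)H(X_1|B)-H(X_1|B,Y^n)\bigr],
\end{equation*}
in which the first bracket is $\ge 0$ by induction; the whole inductive step therefore reduces to proving $H(X_1|B,Y^n)\le H(\alpha)H(X_1|B)$, a conditional version of Conjecture~\ref{conj:Ib_Yn} for the single bit $X_1$ with side information $B$. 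Since $B=0$ already determines $X_1$, this in turn collapses to controlling the posterior of $X_1$ on the event $\{B=1\}$; this last quantity is, I expect, where the real difficulty lies.
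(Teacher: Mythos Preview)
Your plan is candid about where it stalls: both routes end at an unproven inequality (either $(\star)$ directly, or the conditional bound $H(X_1\mid B,Y^n)\le H(\alpha)\,H(X_1\mid B)$), so as written this is not a proof. Your algebraic identity is correct, and the reduction to a single-bit conditional statement is a clean reformulation, but you have only relocated the difficulty, not removed it. One warning about the Lagrangian route: minimizing $\EE[\phi(s)]$ over \emph{all} $[0,1]$-valued $s$ with $\EE[s]=\mu$ and $\EE[h(s)]\ge h(\mu)H(\alpha)$ discards the fact that $s$ is the posterior of a lex function through a BSC; there is no reason to expect the relaxed two-point minimum to respect the conjectured bound, so that route may simply be false at the level of generality you propose.

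For comparison, the paper does not give an analytic proof for all $\alpha$ either. Its approach differs from yours in two ways. First, it works not with $H(b(X^n)\mid Y^n)$ but with the one-sided quantity $T_\alpha(p)=\EE_{Y^n}f(\Pr\{b=0\mid Y^n\})$, $f(x)=-x\log x$, and proves $T_\alpha(p)\ge f(p)H(\alpha)$; by the lex symmetry this suffices. With this choice your recursion becomes the \emph{exact} self-similarity $T_\alpha(p)-f(p)H(\alpha)=\tfrac12\bigl[T_\alpha(2p)-f(2p)H(\alpha)\bigr]$ (Lemma~\ref{lem:functional}), with no leftover term analogous to your $H(\alpha)H(X_1\mid B)-H(X_1\mid B,Y^n)$, and the problem reduces cleanly to $p\in[1/2,1]$. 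Second, on $[1/2,1]$ the paper does not attempt an analytic bound; it proves a pseudo-concavity of $T_\alpha$ on every dyadic interval $[k/2^n,(k+1)/2^n]$ (Lemma~\ref{lem:pseudoConcave}) and then runs a finite recursive algorithm that, for a fixed $\alpha$, builds a piecewise-linear barrier simultaneously below $T_\alpha$ and above $f(p)H(\alpha)$. Termination certifies the inequality for that $\alpha$; the paper reports termination on a dense grid of $\alpha$'s but leaves the statement for all $\alpha$ open, explicitly citing the Takagi-like oscillation of $T_\alpha$ as the obstruction to a direct analytic argument. So the gap you hit is real, and the paper circumvents rather than closes it.
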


Clearly,  Conjecture \ref{conj:Ib_Yn} would follow as a corollary if Conjectures \ref{conj:comp} and \ref{conj:bLexLB} were valid.

Referring to Conjecture \ref{conj:bLexLB} as a ``conjecture" is perhaps too modest.  Indeed, we derive a simple recursive algorithm capable of proving \eqref{eqn:bLexLB} for any fixed $\alpha$.  With the assistance of a computer, this algorithm has verified \eqref{eqn:bLexLB} for $\alpha$ ranging from $0$ to $1/2$ in increments of $0.001$. Refer to Theorem \ref{thm:algoVerify} and the following discussion in Section \ref{sec:pseudo} for details.

\subsection{Conjecture \ref{conj:comp} and Isoperimetry}
Conjecture \ref{conj:comp} is reminiscent of a classical theorem in discrete mathematics originally due to Harper \cite{bib:Harper} that gives an exact edge-isoperimetric inequality for the hypercube.  To state the theorem, we need a few basic notations.  Let $Q_n$ be the $n$-dimensional hypercube, and let $V(Q_n)=\Omega_n$ be its set of vertices.  For $S\subseteq V(Q_n)$, the edge boundary $\partial (S)$ is  the set of edges one has to delete to disconnect $S$ from any vertex not in $S$.
\begin{theorem}\label{thm:Harper}
For $S\subseteq V(Q_n)$ with $|S|=k$, we have $|\partial(S)|\geq |\partial(L_n(k))|$.
\end{theorem}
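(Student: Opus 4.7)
My plan is to argue by induction on $n$. The base case $n=1$ follows by direct enumeration: subsets of $\{0,1\}$ of sizes $0,1,2$ have edge boundaries $0,1,0$ respectively, and $L_1(k)$ attains each.

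For the inductive step, split $S\subseteq V(Q_n)$ along the first coordinate, writing $S = (\{0\}\times A)\cup(\{1\}\times B)$ for $A,B\subseteq\{0,1\}^{n-1}$ with $|A|+|B|=k$. The edge boundary decomposes as
\begin{equation*}
|\partial(S)| = |\partial(A)| + |\partial(B)| + |A\triangle B|,
\end{equation*}
where the $\partial$'s on the right are taken inside $Q_{n-1}$, counting edges lying in the $x_1=0$ face, in the $x_1=1$ face, and crossing between the two faces, respectively. Applying the inductive hypothesis to each face and using $|A\triangle B|\geq\bigl||A|-|B|\bigr|$ (tight precisely when one of $A,B$ contains the other) yields
\begin{equation*}
|\partial(S)| \geq f(|A|) + f(|B|) + \bigl||A|-|B|\bigr|,
\end{equation*}
where $f(m):=|\partial(L_{n-1}(m))|$. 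A direct check shows that $L_n(k)$ itself splits into $A=L_{n-1}(\min(k,2^{n-1}))$ and $B=L_{n-1}(\max(k-2^{n-1},0))$, so $B\subseteq A$ and all three of the above inequalities are saturated simultaneously.

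It therefore suffices to prove the optimization claim: among nonnegative integers $a,b\leq 2^{n-1}$ with $a+b=k$, the quantity $f(a)+f(b)+|a-b|$ is minimized by the lex split, namely $(a,b)=(k,0)$ when $k\leq 2^{n-1}$ and $(a,b)=(2^{n-1},k-2^{n-1})$ otherwise. This is the main obstacle and the combinatorial heart of the theorem, since it expresses a discrete convexity/subadditivity property of $f$ that the inductive hypothesis does not directly supply. I would attack it by unwinding the recursion to obtain an explicit formula for $|\partial(L_n(k))|$ in terms of the binary digits of $k$ and then verifying the desired inequality digit-by-digit. An alternative route is the classical \emph{compression} method: iteratively apply shift operators that reduce the lex rank of vertices in $S$ while checking, via a four-case analysis on each $2$-dimensional face, that the boundary weakly decreases; a final coordinate-symmetrization step then transports the compressed set to $L_n(k)$ exactly.
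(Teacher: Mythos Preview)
The paper does not actually prove this theorem. It is stated as a classical result ``originally due to Harper'' with a citation to \cite{bib:Harper}, and serves only as motivation and background for Conjecture~\ref{conj:comp}. The compression operators introduced immediately afterward are there for the paper's own Theorem~\ref{thm:2CompressionHelps}, not to establish Harper's theorem. So there is no in-paper proof to compare against.

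Regarding your proposal on its own merits: the coordinate-split decomposition
\[
|\partial(S)| = |\partial(A)| + |\partial(B)| + |A\triangle B|
\]
and the bound $|A\triangle B|\geq \bigl||A|-|B|\bigr|$ are correct, and your identification of the lex split of $L_n(k)$ is correct. But you explicitly stop at the optimization claim---that $f(a)+f(b)+|a-b|$ is minimized at the lex split---calling it ``the main obstacle,'' and then only sketch two possible attacks without executing either. That claim is precisely where the content of Harper's theorem lives (it encodes a nontrivial discrete-convexity property of $f$ that the raw induction hypothesis does not give you), so as written this is an outline rather than a proof. Either of your proposed routes can be made to work: the binary-digit formula for $|\partial(L_n(k))|$ does yield the inequality after a careful case analysis, and the Bollob\'as--Leader compression argument avoids the optimization step altogether by directly showing each compression weakly decreases $|\partial(\cdot)|$ and that iterated compressions (plus one final swap for the exceptional ``down-set but not lex'' configurations) reach $L_n(k)$. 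You would need to actually carry one of these out.
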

The simplest proofs of Theorem \ref{thm:Harper} rely on so-called \emph{compression operators}, popularized by Bollob\'{a}s and Leader \cite{bib:Compressions}.  These compression operators turn out to be useful in making progress towards Conjecture \ref{conj:comp}, so we introduce them now.

Let $\mathcal{I}$ be subset of $\{1,2,\dots,n\}$ of cardinality $k$.  To be concrete, let $\mathcal{I}=\{i_1,i_2,\dots,i_k\}$, where $i_1 < i_2 <\cdots < i_k$.  For a set $B\subseteq \Omega_n$ and $x^n$ having $x_i = 0$ for all $i\in \mathcal{I}$, we define the $\mathcal{I}$-section of $B$ at $x^n$ by
\begin{align}
B_{\mathcal{I}}(x^n) = \left\{z^k : y^n\in B,
y_i = \left\{\begin{array}{ll}
z_{j} & \mbox{if $i = i_j\in\mathcal{I}$}\\
x_i & \mbox{otherwise}
\end{array} \right. \right\}. \label{eqn:Isection}
\end{align}

For instance, if $B=\{000,001,011,101\}$, then examples of $\mathcal{I}$-sections at different $x^3\in\Omega_3$ are given by:
\begin{align}
B_{\{1\}}(001) &= \{0,1\},\\
B_{\{2\}}(100) &= \emptyset,\\
B_{\{1,2\}}(000) &= \{00\},\\
B_{\{1,2\}}(001) &= \{00,01,10\}.
\end{align}

The $\mathcal{I}$-compression of $B$, $C_{\mathcal{I}}(B)$, is defined in terms of its $\mathcal{I}$-sections
\begin{align*}
\left(C_{\mathcal{I}}(B)\right)_{\mathcal{I}}(x^n) = L_k\left( |B_{\mathcal{I}}(x^n)|\right).
\end{align*}
In other words, $C_{\mathcal{I}}$ replaces each $\mathcal{I}$-section of $B$ with an initial segment of the lexicographical order.  We say that $B$ is $\mathcal{I}$-compressed if $C_{\mathcal{I}}(B)= B$.  Note that $C_{\mathcal{I}}(B)$ is always $\mathcal{I}$-compressed.

Continuing the above example of
$B=\{000,001,011,101\}$,  example $\mathcal{I}$-compressions are given by:
\begin{align}
C_{\{1\}}(B) &= \{000,001,011,101\},\\
C_{\{2,3\}}(B) &= \{000, 001, 010, 100\}.
\end{align}

We pause to make two important observations.  First,
$\mathcal{I}$-compression preserves the size of the set on which it operates.  That is, $|C_{\mathcal{I}}(B)|= |B|$.  Second, if $B$ is $\mathcal{I}$-compressed, then it is also $\mathcal{J}$-compressed for all $\mathcal{J}\subset \mathcal{I}$.

The following theorem states that when $|\mathcal{I}|=2$, applying an $\mathcal{I}$-compression to  $b^{-1}(0)$ does not decrease the information $b(X^n)$ reveals about $Y^n$.  Thus, compression provides a method of modifying functions in a manner that does not adversely affect the mutual information $I(b(X^n);Y^n)$.

\begin{theorem}\label{thm:2CompressionHelps}
Let $b:\Omega_n\rightarrow \Omega$ and let $\mathcal{I}\subset\{1,\dots,n\}$ satisfy $|\mathcal{I}|=2$. If $\hat{b}:\Omega_n\rightarrow \Omega$ is defined by its preimage $\hat{b}^{-1}(0) = C_{\mathcal{I}}({b}^{-1}(0))$, then $I(\hat{b}(X^n);Y^n)\geq I({b}(X^n);Y^n)$.
\end{theorem}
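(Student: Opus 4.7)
The plan is to show $H(\hat b(X^n)\mid Y^n)\le H(b(X^n)\mid Y^n)$; combined with $H(\hat b(X^n))=H(b(X^n))$ (since compression preserves $|b^{-1}(0)|$ and $X^n$ is uniform), this yields the desired mutual-information inequality. Using that $Y^n$ is uniform on $\Omega_n$, write $H(b(X^n)\mid Y^n)=2^{-n}\sum_{y^n}\phi(f(y^n))$ with $\phi$ the binary entropy function and $f(y^n)=P(b(X^n)=0\mid Y^n=y^n)$. Taking $\mathcal{I}=\{1,2\}$ without loss of generality and partitioning $y^n=(v,w)$ with $v\in\Omega_2$ and $w\in\Omega_{n-2}$, conditional independence of the BSC coordinates gives
\begin{equation*}
f(v,w)=\sum_{z\in\Omega_2}\alpha_z(v)\,\tau_w(z),\quad \tau_w(z)=\sum_u q(u,w)\,\mathbf{1}_{z\in B_u},
\end{equation*}
where $\alpha_z(v)=\alpha^{d(z,v)}(1-\alpha)^{2-d(z,v)}$ is the $2$-bit reverse-BSC kernel, $q(u,w)$ is its $(n-2)$-bit analogue, and $B=b^{-1}(0)$.

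In matrix form $f(\cdot,w)=M\tau_w$ and $\hat f(\cdot,w)=M\hat\tau_w$, where $M$ is the $4\times 4$ symmetric doubly stochastic tensor-square of the $1$-bit BSC kernel and $\hat\tau_w$ is defined analogously using $\hat B_u=L_2(|B_u|)$. A key observation is that $\hat\tau_w$ majorizes $\tau_w$: the vector $\hat\tau_w$ is nonincreasing in the lex order (so its lex-initial partial sums are its top-$k$ sums), the total masses agree, and for any $k$-subset $S\subseteq\Omega_2$,
\begin{equation*}
\sum_{z\in S}\tau_w(z)=\sum_u q(u,w)\,|B_u\cap S|\le \sum_u q(u,w)\,\min(|B_u|,k)=\sum_{i=1}^k\hat\tau_w(z_i),
\end{equation*}
with $z_1\prec_L\cdots\prec_L z_k$. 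Since $\sum_v\phi$ is Schur-concave, it suffices to upgrade this to $M\hat\tau_w$ majorizing $M\tau_w$ for each $w$.

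The main obstacle is precisely this last step, as majorization is not generally preserved under doubly stochastic transformations. My approach would exploit the tensor-product structure $M=M_1\otimes M_1$, either via Hadamard diagonalization of $M$ (eigenvalues $1$, $1-2\alpha$, $1-2\alpha$, $(1-2\alpha)^2$) or by decomposing $C_\mathcal{I}$ into a sequence of elementary paired swaps that replace one $z\in B_u$ by $z'\prec_L z$ with $z'\notin B_u$. Each such swap induces $\Delta f(v,w)=q(u,w)(\alpha_{z'}(v)-\alpha_z(v))$ with $\sum_v\Delta f(v,w)=0$, so the effect on $\sum_v\phi(f(v,w))$ is governed by a case analysis over the finitely many $(z',z)$-pair types. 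The delicate point is that individual swaps need not decrease the entropy sum in isolation; closing the argument likely requires pairing swaps across distinct $u$-sections, or invoking global Schur-concavity at the level of $\sum_w\sum_v\phi$, so as to convert the pointwise-in-$w$ majorization $\hat\tau_w\succeq\tau_w$ into the post-$M$ Schur-ordering needed to conclude.
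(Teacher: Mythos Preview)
Your setup is sound and the pointwise majorization $\hat\tau_w\succeq\tau_w$ is valid, but the argument stops exactly where the real difficulty lies: you need $\sum_v\phi\bigl((M\hat\tau_w)_v\bigr)\le\sum_v\phi\bigl((M\tau_w)_v\bigr)$, and as you yourself flag, a doubly stochastic $M$ does not preserve the majorization order. The suggested fixes---swap decompositions, pairing across $u$-sections, ``global'' Schur-concavity in $w$---are speculative; the sections are decoupled in your formulation, so there is no evident mechanism for cross-section pairing. More tellingly, the paper records that $\mathcal I$-compression can \emph{increase} $H(b(X^n)\mid Y^n)$ when $|\mathcal I|>2$, yet your majorization $\hat\tau_w\succeq\tau_w$ goes through verbatim for any $|\mathcal I|$. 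Hence the post-$M$ step must genuinely fail in dimension three, and whatever closes the $|\mathcal I|=2$ case has to exploit something specific to dimension two that your framework does not yet isolate.

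The paper's proof supplies that missing idea by a different and shorter route. It first establishes the $|\mathcal I|=1$ case (Lemma~\ref{lem:1CompressionHelps}) and applies it repeatedly so that $B$ is already $\{i\}$-compressed for each $i\in\mathcal I$; then every $\mathcal I$-section lies in $\{\emptyset,\{00\},\{00,01\},\{00,10\},\{00,01,10\},\Omega_2\}$, and only $\{00,10\}$ is non-lex. The crucial observation is that $\{00,10\}$ and its lex replacement $\{00,01\}$ differ by a \emph{swap of the two $\mathcal I$-coordinates}. Defining $\breve b$ from $\hat b$ by that coordinate swap, one checks that for every $y^n$,
\[
\Pr\{b(X^n)=0\mid y^n\}=\theta\,\Pr\{\hat b(X^n)=0\mid y^n\}+(1-\theta)\,\Pr\{\breve b(X^n)=0\mid y^n\},
\]
with $\theta$ depending only on $y^{n-2}$. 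Concavity of entropy gives a pointwise bound; averaging over $(y_{n-1},y_n)$ and invoking the permutation symmetry $H(\hat b(X^n)\mid y^{n-2},Y_{n-1}^n)=H(\breve b(X^n)\mid y^{n-2},Y_{n-1}^n)$ collapses the convex combination and finishes. This ``convex combination with a symmetric twin'' device is the dimension-two-specific ingredient you are missing: it works precisely because, after $1$-compressions, the unique bad section type is a coordinate permutation of the good one.
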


By definition, if $C_{\mathcal{I}}(\cdot)$ changes an element of $b^{-1}(0)$, it moves it \emph{lower} in the lexicographical ordering on $\Omega_n$.  Therefore, one can repeatedly apply Theorem \ref{thm:2CompressionHelps} for different subsets $\mathcal{I}$ of cardinality 2, ultimately terminating at a function $\hat{b}$ which is $\mathcal{I}$-compressed for all $\mathcal{I}$ with $|\mathcal{I}|\leq 2$. Hence, we have the following corollary.

\begin{corollary}\label{cor:compSuff}  Let $\mathcal{S}_n$ be the set of functions $b:\Omega_n\rightarrow \Omega$ for which $b^{-1}(0)$ is $\mathcal{I}$-compressed for all $\mathcal{I}$ with $|\mathcal{I}|\leq 2$.
In maximizing $I(b(X^n);Y^n)$, it is sufficient to consider functions $b \in \mathcal{S}_n$.
\end{corollary}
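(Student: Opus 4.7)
The plan is to obtain the corollary as a direct consequence of iterating Theorem \ref{thm:2CompressionHelps}. Given any Boolean function $b:\Omega_n\rightarrow\Omega$, I would construct $\hat{b}\in\mathcal{S}_n$ with $I(\hat{b}(X^n);Y^n)\geq I(b(X^n);Y^n)$ via a greedy compression procedure and then rely on a monovariant to ensure termination in finitely many steps.

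Concretely, I would introduce the potential
\begin{align*}
\phi(B) = \sum_{x^n\in B} \sum_{i=1}^n x_i\, 2^{n-i},
\end{align*}
which simply sums the integer values of the binary strings in $B$. Each application of $C_{\mathcal{I}}$ with $|\mathcal{I}|=2$ replaces every $\mathcal{I}$-section of $B$ by an initial lexicographical segment of the same cardinality, so any element whose $\mathcal{I}$-coordinates actually change can only have them replaced by strictly smaller ones in the lexicographical sense, while the remaining coordinates are untouched. Hence $\phi(C_{\mathcal{I}}(B))\leq \phi(B)$, with equality iff $B$ is already $\mathcal{I}$-compressed. Running the procedure ``while some $\mathcal{I}$ of size $2$ witnesses that $b^{-1}(0)$ is not $\mathcal{I}$-compressed, replace $b^{-1}(0)$ by $C_{\mathcal{I}}(b^{-1}(0))$'' therefore strictly decreases the nonnegative integer $\phi(b^{-1}(0))$ at every step, which forces termination after finitely many iterations. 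By Theorem \ref{thm:2CompressionHelps}, each step weakly increases $I(b(X^n);Y^n)$, so the terminal function $\hat{b}$ satisfies $I(\hat{b}(X^n);Y^n)\geq I(b(X^n);Y^n)$.

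It remains only to observe that $\hat{b}\in\mathcal{S}_n$. By construction, $\hat{b}^{-1}(0)$ is $\mathcal{I}$-compressed for every $\mathcal{I}$ with $|\mathcal{I}|=2$. Invoking the subset observation already recorded in the text, namely that if a set is $\mathcal{I}$-compressed then it is $\mathcal{J}$-compressed for every $\mathcal{J}\subset \mathcal{I}$, this immediately upgrades to $\mathcal{I}$-compressedness for every $\mathcal{I}$ with $|\mathcal{I}|\leq 2$ (the trivial cases $n=0,1$ require no argument). The main conceptual work is done entirely by Theorem \ref{thm:2CompressionHelps}; once that is granted, the only obstacle in the corollary is the purely combinatorial task of ensuring the iterative process halts, which the monovariant $\phi$ disposes of cleanly.
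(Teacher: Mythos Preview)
Your proposal is correct and follows essentially the same approach as the paper. The paper's justification (given in the paragraph immediately preceding the corollary) simply notes that each nontrivial $\mathcal{I}$-compression moves elements of $b^{-1}(0)$ strictly lower in the lexicographical order on $\Omega_n$, so repeated applications of Theorem~\ref{thm:2CompressionHelps} must terminate; your potential $\phi$ makes this monovariant explicit and the remainder of your argument matches the paper's.
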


The implications of Theorem \ref{thm:2CompressionHelps} and its corollary are twofold.
First, it allows the  verification of Conjecture \ref{conj:comp} for modest values of $n$.  Indeed, we have numerically validated Conjectures \ref{conj:Ib_Yn} and  \ref{conj:comp} for $n\leq 7$ by evaluating $I(b(X^n);Y^n)$ for $b \in \mathcal{S}_n$.  To appreciate the reduction afforded by Corollary \ref{cor:compSuff}, define $\mathcal{B}_n$ to be the set of all $2^{2^n}$ Boolean functions on $n$ inputs.  A comparison between $|\mathcal{S}_n|$ and $|\mathcal{B}_n|$ is given in Table \ref{tbl:cost}.
\vskip1ex

\begin{table}
{\normalsize
\begin{center}
\begin{tabular}{ c || c | c }
$n$ & $ |\mathcal{S}_n|$ & $|\mathcal{B}_n|$ \\
  \hline
  2 & 5 & 16 \\
  3 & 10 & 256 \\
    4 & 25 & 65,536 \\
      5 & 119 & $4.3\times 10^9$ \\
        6 & 1173 & $1.8\times 10^{19}$ \\
          7 & 44,315 & $3.4\times 10^{38}$
\end{tabular}
\end{center}
}
\caption{Reduction in number of candidate Boolean functions to be considered for verification of Conjecture \ref{conj:comp}.}
\label{tbl:cost}
\end{table}
\vskip1ex
Second, Theorem \ref{thm:2CompressionHelps} reinforces the intuition behind Conjecture \ref{conj:comp}.  As we noted above, if $C_{\mathcal{I}}(\cdot)$ changes an element of $b^{-1}(0)$, it moves it \emph{lower} in the lexicographical ordering on $\Omega_n$.  Thus, roughly speaking, applying $\mathcal{I}$-compression to $b^{-1}(0)$ yields a function $\hat{b}$ which is (i) closer to an initial segment of the lexicographical order, and (ii) for $|\mathcal{I}|\leq 2$ satisfies $H(\hat{b}(X^n)|Y^n)\leq H({b}(X^n)|Y^n)$.

Ideally, Theorem \ref{thm:2CompressionHelps}  should generalize to include $\mathcal{I}$-compressions for $|\mathcal{I}|>2$.  Indeed, if we could take $|\mathcal{I}|=n$, Conjecture \ref{conj:comp} would be proved.  However, we have found counterexamples where compression \emph{increases} $H(b(X^n)|Y^n)$ for $|\mathcal{I}|>2$ (but still reduces $H(b(X^n)|Y^n)$ for $|\mathcal{I}|=n$).  We omit the details.%

\subsection{An Algorithmic Proof of Conjecture \ref{conj:bLexLB}}\label{sec:pseudo}
Now, we turn toward establishing Conjecture \ref{conj:bLexLB}.  Unless otherwise specified, all Boolean functions in this subsection are assumed to be lex.

Define $f(x)=-x\log x$.  Note that if $b$ is lex, then so is $\neg b$ (i.e., the negation of $b$) up to a relabeling of inputs.  Therefore, to prove \eqref{eqn:bLexLB}, it is sufficient to prove
\begin{align}
\EE_{Y^n} &f\left(\Pr\{b(X^n)=0|Y^n\}\right) \notag\\
&\geq f\left(\Pr\{b(X^n)=0\}\right) H(\alpha). \label{eqn:fIneq}
\end{align}

To simplify notation, for a dyadic rational $p=k/2^n$, define
\begin{align}
T_{\alpha}(p) \triangleq \EE_{Y^n} f\left(\Pr\{b(X^n)=0|Y^n\}\right),
\end{align}
where $b$ is the unique lex function on $n$ inputs with $\Pr\{b(X^n)=0\} = p$.  Note that if $k$ is even, $b$ does not depend on its input bit $x_n$.  Therefore, $T_{\alpha}(p)$ is well-defined for all dyadic rationals $p\in [0,1]$. It is a simple exercise to show that $T_{\alpha}(\cdot)$ is continuous on the dyadic rationals (in fact, it is H\"{o}lder continuous with exponent 1/2). Therefore, $T_{\alpha}(p)$ is also well-defined when $p\in[0,1]$ is not a dyadic rational, by considering its unique continuous extension to $[0,1]$.

Thus, the validity of \eqref{eqn:fIneq} for all lex $b$ (and all $n$) is equivalent to the inequality
\begin{align}
T_{\alpha}(p) &\geq f\left(p\right) H(\alpha) \quad \forall p\in[0,1], \label{eqn:fIneqT}
\end{align}
motivating the following theorem.

\begin{theorem}\label{thm:algoVerify}
Fix $\alpha\in (0,1/2)$. If a call to Algorithm \ref{alg:testIneq} with arguments $(p_{-}, p_{+})=(1/2,1)$ eventually terminates, then Conjecture \ref{conj:bLexLB} is true for the chosen $\alpha$.
\end{theorem}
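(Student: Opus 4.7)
The plan is to reduce the continuous statement $T_\alpha(p) \geq f(p) H(\alpha)$ for all $p \in [0,1]$ to a finite verification over $[1/2,1]$, and then to argue that Algorithm \ref{alg:testIneq} performs exactly this verification.

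First I would derive a self-similar recursion for $T_\alpha$ on $[0,1/2]$. If $b$ is lex on $n$ inputs with $\Pr\{b=0\}=p \leq 1/2$, then $b^{-1}(0)\subseteq \{x^n:x_1=0\}$, so $b(x^n)=\mathbf{1}[x_1=0]\,\tilde{b}(x_2^n)$ where $\tilde{b}$ is the lex function with $\Pr\{\tilde{b}=0\}=2p$. Conditioning on $Y_1$ and using the identity $f(cx)=cf(x)-cx\log c$ then yields
\begin{equation*}
T_\alpha(p) \;=\; \tfrac{1}{2}\, T_\alpha(2p) + p\,H(\alpha), \qquad p\in[0,1/2].
\end{equation*}
Since $\tfrac{1}{2}f(2p)+p=f(p)$ (in log base $2$), this recursion propagates $T_\alpha(2p)\geq f(2p)H(\alpha)$ to $T_\alpha(p)\geq f(p)H(\alpha)$. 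Consequently it suffices to verify the target inequality on $p\in[1/2,1]$, which explains why Algorithm \ref{alg:testIneq} is invoked with $(p_-,p_+)=(1/2,1)$.

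Next I would interpret the algorithm as attempting to produce a local certificate on the current sub-interval $[p_-,p_+]$, either returning success or bisecting into $(p_-,(p_-+p_+)/2)$ and $((p_-+p_+)/2,p_+)$. The natural certificate combines exact evaluation of $T_\alpha$ at the dyadic endpoints---obtained by unrolling the recursion down to explicitly computable seed values such as $T_\alpha(1)=0$ and $T_\alpha(1/2)$---with a modulus-of-continuity estimate controlling the oscillation of $T_\alpha(p)-f(p)H(\alpha)$ across the interval. Because $T_\alpha$ is H\"older of exponent $1/2$, a uniform bound $|T_\alpha(p)-T_\alpha(p')|\leq C_\alpha\sqrt{|p-p'|}$ combined with the explicit Lipschitz behaviour of $f(\cdot)H(\alpha)$ away from $\{0,1\}$ is enough either to certify the inequality on $[p_-,p_+]$ or to force a bisection. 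Termination therefore means that after finitely many bisections a valid certificate has been obtained on every piece of a finite dyadic partition of $[1/2,1]$.

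Combining everything, if Algorithm \ref{alg:testIneq} halts, then $T_\alpha(p)\geq f(p)H(\alpha)$ holds pointwise on $[1/2,1]$; the recursion extends it to $[0,1]$, and summing with the analogous inequality at $1-p$ delivers $H(b(X^n)|Y^n)=T_\alpha(p)+T_\alpha(1-p)\geq H(b(X^n))H(\alpha)$ for every lex $b$, which is Conjecture \ref{conj:bLexLB}. The main obstacle I foresee is the sharpness of the local certificate: a universal H\"older constant will give pessimistic bounds near $p=1/2$, where $f'(p)H(\alpha)$ is steepest, so a naive bisection strategy may simply fail to terminate. I therefore expect the actual local test to exploit the recursion more aggressively---e.g.\ evaluating $T_\alpha$ at several nested dyadic points inside $[p_-,p_+]$, or using monotonicity of the residual $T_\alpha(p)-f(p)H(\alpha)$ between consecutive dyadics---and the real technical content of the proof will be to show that this refined per-interval estimate is a sound sufficient condition for $T_\alpha(p)\geq f(p)H(\alpha)$.
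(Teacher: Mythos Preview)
Your reduction to $p\in[1/2,1]$ via the self-similar recursion $T_\alpha(p)=\tfrac{1}{2}T_\alpha(2p)+pH(\alpha)$ is exactly the paper's Lemma~\ref{lem:functional}, and your final assembly (summing the inequality at $p$ and at $1-p$) is fine.

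Where you diverge from the paper is in the nature of the per-interval certificate, and this is a genuine gap. You propose that {\sc CheckChord} combines endpoint values of $T_\alpha$ with a H\"older modulus of continuity to bound the oscillation of $T_\alpha$ across $[p_-,p_+]$. That is not what the subroutine does, and you correctly suspect that such a bound would be too loose to guarantee termination. The actual mechanism is a \emph{pseudo-concavity} property of $T_\alpha$ (the paper's Lemma~\ref{lem:pseudoConcave}): whenever $p_-=k/2^n$ and $p_+=(k+1)/2^n$ are \emph{consecutive} dyadics at some level, one has
\[
T_\alpha\bigl(\theta p_-+(1-\theta)p_+\bigr)\;\ge\;\theta\,T_\alpha(p_-)+(1-\theta)\,T_\alpha(p_+)\qquad\text{for all }\theta\in[0,1].
\]
The proof is a one-line application of concavity of $f(x)=-x\log x$ to the identity
\[
\Pr\{b(X^{n+1})=0\mid Y^{n+1}\}
=\Pr\{X_{n+1}=0\mid Y_{n+1}\}\Pr\{b_+(X_1^n)=0\mid Y_1^n\}
+\Pr\{X_{n+1}=1\mid Y_{n+1}\}\Pr\{b_-(X_1^n)=0\mid Y_1^n\},
\]
followed by dyadic induction and continuity. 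Given this, the chord $C(x)$ through $(p_-,T_\alpha(p_-))$ and $(p_+,T_\alpha(p_+))$ is itself a \emph{lower} bound for $T_\alpha$ on $[p_-,p_+]$, so {\sc CheckChord} simply tests whether $\min_{x\in[p_-,p_+]}\bigl(C(x)-f(x)H(\alpha)\bigr)\ge 0$; if so, $T_\alpha\ge C\ge fH(\alpha)$ on that interval with no continuity estimate needed. Because every bisection produces two sub-intervals again of the form $[k/2^m,(k+1)/2^m]$, the pseudo-concavity hypothesis is preserved throughout the recursion, and termination yields the piecewise-linear minorant $g$ sandwiched between $fH(\alpha)$ and $T_\alpha$. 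Replacing your H\"older argument with this lemma closes the gap.
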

\vspace{-15pt}

\begin{center}
\begin{pseudocode}[framebox]{TestInequality}{p_{-}, p_{+}}\label{alg:testIneq}

\MAIN

\IF \CALL{CheckChord}{p_-,p_+} < 0
\THEN
\BEGIN
p \GETS \frac{1}{2}(p_- + p_+)\\
\CALL{TestInequality}{p_-, p} \\
\CALL{TestInequality}{p, p_+}
\END
\ENDMAIN\\
~\\

\PROCEDURE{CheckChord}{a,b}
\COMMENT{$\begin{array}{ll}
C(x) \mbox{~is  the chord connecting} \\ \mbox{the points~} (a,T_{\alpha}(a)) \mbox{~and~} (b,T_{\alpha}(b)).
\end{array}$}\\
C(x) := \frac{T_{\alpha}(b)-T_{\alpha}(a)}{b-a}(x-a)+T_{\alpha}(a)\\
\nu \GETS \min_{x\in[a,b]}  C(x)- f(x)H(\alpha)\\

\RETURN{\nu}
\vspace{-18pt}
\ENDPROCEDURE
\end{pseudocode}
\end{center}
\vspace{-12pt}

\begin{remark} In the subroutine {\sc CheckChord}($a,b$) of Algorithm \ref{alg:testIneq}, the minimization   %
has a closed form solution.
\end{remark}

\begin{figure}
  \centering
      \includegraphics[width=0.47\textwidth,clip=true, trim=18mm 5mm 10mm 0mm]{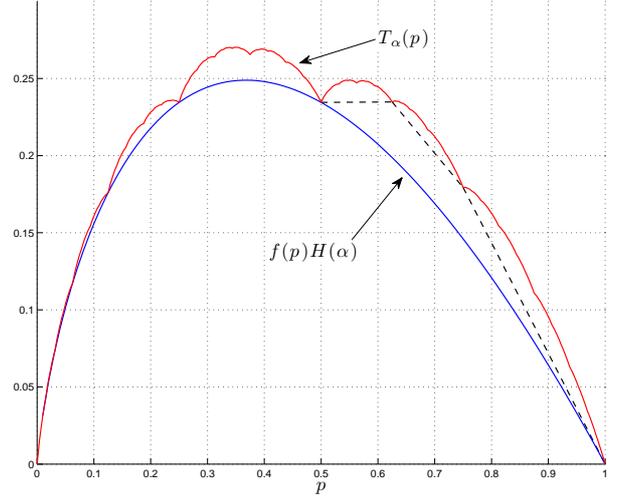}
  \caption{A comparison of $T_{\alpha}(p)$ and $f(p)H(\alpha)$ for $\alpha=0.1$. The broken line shows the three chords Algorithm  \ref{alg:testIneq} constructs before terminating.}\label{fig:takagi}\vspace{-13pt}
\end{figure}

In words, Algorithm \ref{alg:testIneq} recursively constructs a piecewise linear function on the interval $p\in[1/2,1]$ which simultaneously upper bounds $f(p)H(\alpha)$ and lower bounds $T_{\alpha}(p)$.  As discussed in Section \ref{sec:pseudoProof}, this is sufficient to prove \eqref{eqn:fIneqT}. Figure \ref{fig:takagi} illustrates this procedure
for $\alpha=0.1$.

Using a Matlab implementation of Algorithm \ref{alg:testIneq}, we have validated \eqref{eqn:fIneqT} for $\alpha$ ranging from $0$ to $1/2$ in increments of $0.001$.  Hence, it is reasonable to believe that Conjecture \ref{conj:bLexLB} is true in general.

Despite the apparent gap between $T_{\alpha}(p)$ and $f(p)H(\alpha)$ for $p\in(1/2,1)$ (e.g., Fig.~\ref{fig:takagi}), the oscillatory behavior of $T_{\alpha}(p)$ seems to render traditional analysis techniques ineffective in establishing \eqref{eqn:fIneqT}.  This was our motivation for pursuing an algorithmic proof.  To get a sense for the strange behavior of $T_{\alpha}(p)$, we point out that it is possible to show that $\lim_{\alpha\rightarrow 0}T_{\alpha}(p)/H(\alpha)$ is equal to the \emph{Takagi function}, a classical construction of an everywhere-continuous, nowhere-differentiable function closely related to the edge-isoperimetric inequality given in Theorem \ref{thm:Harper} (cf. \cite{bib:Allaart, bib:mcFunction}).   We omit the details due to space constraints.

\section{Proofs}\label{sec:proof}

\subsection{Proof of Theorem \ref{thm:2CompressionHelps}}
We begin the proof of Theorem \ref{thm:2CompressionHelps} by first proving the following result for 1-dimensional compressions.  
\begin{lemma}\label{lem:1CompressionHelps}
Let $b:\Omega_n\rightarrow \Omega$ and $i\in\{1,2,\dots,n\}$. If $\hat{b}:\Omega_n\rightarrow \Omega$ is defined by its preimage $\hat{b}^{-1}(0) = C_{\{i\}}({b}^{-1}(0))$, then $I(\hat{b}(X^n);Y^n)\geq I({b}(X^n);Y^n)$.
\end{lemma}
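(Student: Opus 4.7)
The plan is to reduce the lemma to a one-line concavity argument on the $i$-th coordinate. Since $X^n$ is uniform and $\{i\}$-compression preserves $|b^{-1}(0)|$, we have $H(\hat{b}(X^n)) = H(b(X^n))$, so it is enough to show $H(\hat{b}(X^n)\mid Y^n) \leq H(b(X^n)\mid Y^n)$.

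First I would unwind what $C_{\{i\}}$ does coordinate-wise. Let $A = \{u\in\Omega_{n-1} : b(u^{(0)})=0\}$ and $B=\{u\in\Omega_{n-1} : b(u^{(1)})=0\}$ be the two slices of $b^{-1}(0)$ at $x_i=0$ and $x_i=1$, where $u^{(c)}$ denotes the insertion of the bit $c$ at position $i$. Inspecting the definition of $C_{\{i\}}$ (the only nontrivial case is when a section equals $\{1\}$, which is replaced by $L_1(1)=\{0\}$) shows that the corresponding slices of $\hat b^{-1}(0)$ are $A\cup B$ and $A\cap B$. Next I would condition on $Y^n_{\sim i}$: because the pairs $(X_j,Y_j)$ are independent across $j$, given $Y^n_{\sim i}=y^n_{\sim i}$ the posterior of $X^n_{\sim i}$ is independent of $(X_i,Y_i)$. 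Setting $\pi_0 = \Pr\{X^n_{\sim i}\in A\mid Y^n_{\sim i}=y^n_{\sim i}\}$ and $\pi_1 = \Pr\{X^n_{\sim i}\in B\mid Y^n_{\sim i}=y^n_{\sim i}\}$, a direct computation gives
\[
\Pr\{b(X^n)=0\mid Y^n=(y^n_{\sim i},y_i)\} = \Pr\{X_i=0\mid Y_i=y_i\}\,\pi_0 + \Pr\{X_i=1\mid Y_i=y_i\}\,\pi_1.
\]
Averaging the binary entropy $h$ of this posterior over $y_i\in\{0,1\}$ with equal weights, the contribution of $y^n_{\sim i}$ to $H(b(X^n)\mid Y^n)$ is
\[
F_\alpha(\pi_0,\pi_1) \triangleq \tfrac{1}{2}h\bigl((1-\alpha)\pi_0+\alpha\pi_1\bigr)+\tfrac{1}{2}h\bigl(\alpha\pi_0+(1-\alpha)\pi_1\bigr).
\]

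The heart of the proof is then the pointwise inequality $F_\alpha(\hat\pi_0,\hat\pi_1)\leq F_\alpha(\pi_0,\pi_1)$, where $\hat\pi_0,\hat\pi_1$ are the analogous quantities for $\hat b$. From the slice description above, $\hat\pi_0$ and $\hat\pi_1$ are the conditional probabilities of $\{X^n_{\sim i}\in A\cup B\}$ and $\{X^n_{\sim i}\in A\cap B\}$, so inclusion--exclusion gives $\hat\pi_0+\hat\pi_1=\pi_0+\pi_1$, while $\hat\pi_0\geq\max(\pi_0,\pi_1)$ and $\hat\pi_1\leq\min(\pi_0,\pi_1)$ imply $|\hat\pi_0-\hat\pi_1|\geq|\pi_0-\pi_1|$. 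The two arguments of $h$ inside $F_\alpha$ are symmetric about the common midpoint $(\pi_0+\pi_1)/2$ with gap $|1-2\alpha|\cdot|\pi_0-\pi_1|$, so the compression preserves this midpoint while widening the gap. Concavity of $h$ then yields the desired inequality, and summing over $y^n_{\sim i}$ weighted by $p(y^n_{\sim i})$ completes the proof.

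I do not expect a serious obstacle: the only arguably nontrivial steps are identifying the compressed slices as $(A\cup B,\,A\cap B)$ and the final concavity application, both of which are short. If anything, the main delicacy is bookkeeping---carefully invoking independence of the coordinate pairs $(X_j,Y_j)$ to put the posterior into the convex-combination form on which concavity of $h$ can bite.
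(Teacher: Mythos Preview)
Your argument is correct and, at its core, is the same reduction the paper makes: fix $y^{n}_{\sim i}$, observe that the posterior $\Pr\{b(X^n)=0\mid Y^n\}$ is a $(1-\alpha,\alpha)$ mixture of two slice probabilities, and conclude via concavity of the binary entropy. The packaging differs, though. You identify the compressed slices explicitly as $(A\cup B,\,A\cap B)$, use inclusion--exclusion to see that compression preserves $\pi_0+\pi_1$ while enlarging $|\pi_0-\pi_1|$, and then invoke the ``fixed midpoint, wider spread $\Rightarrow$ smaller average'' form of concavity. The paper instead introduces an auxiliary function $\breve{b}(x^n)=\hat b(x^{n-1},\neg x_n)$, shows that $\Pr\{b=0\mid y^n\}$ is a convex combination (with weight depending only on $y^{n-1}$) of $\Pr\{\hat b=0\mid y^n\}$ and $\Pr\{\breve b=0\mid y^n\}$, applies Jensen, averages over $y_n$, and uses the symmetry $H(\hat b\mid y^{n-1},Y_n)=H(\breve b\mid y^{n-1},Y_n)$. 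Your route is a bit more explicit and self-contained; the paper's convex-combination-with-symmetry device is what lets the same template extend cleanly to the $|\mathcal I|=2$ case (their Theorem~\ref{thm:2CompressionHelps}), where the analogue of $\breve b$ is obtained by permuting the two coordinates of $\mathcal I$.
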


\begin{proof}
It suffices to consider the case where $i=n$, as any other case can be handled by first permuting coordinates.

 \if 0
By possibly inverting the $n^{th}$ input of $b(\cdot)$, we can assume without loss of generality that
\begin{align*}
\Pr\{b(X^n)=0|y^{n-1},0\} \geq \Pr\{b(X^n)=0|y^{n-1},1\}
\end{align*}
  for a fixed $y^{n-1}$.

 To see this, define $A_0=\{x^{n-1} : (x^{n-1},0) \in b^{-1}(0)\}$ and $A_1=\{x^{n-1} : (x^{n-1},1) \in b^{-1}(0)\}$.  By considering the function $b$, or equivalently the function
$b'(x^n) := b(x^{n-1},\neg x_n)$, we can guarantee that
\begin{align}
\sum_{x^{n-1}\in A_0} p(x^{n-1}|y^{n-1}) \geq \sum_{x^{n-1}\in A_1} p(x^{n-1}|y^{n-1}).
\end{align}
Then note that, since $\alpha\leq 1/2$, and
\begin{align*}
\Pr\{b(X^n)=0|y^{n-1},0\} &= (1-\alpha)\sum_{x^{n-1}\in A_0} p(x^{n-1}|y^{n-1})  \\
&+ \alpha \sum_{x^{n-1}\in A_1} p(x^{n-1}|y^{n-1})\\
\Pr\{b(X^n)=0|y^{n-1},1\} &= \alpha \sum_{x^{n-1}\in A_0} p(x^{n-1}|y^{n-1})  \\
&+ (1-\alpha) \sum_{x^{n-1}\in A_1} p(x^{n-1}|y^{n-1}),
\end{align*}
we have $\Pr\{b(X^n)=0|y^{n-1},0\} \geq \Pr\{b(X^n)=0|y^{n-1},1\}$ as desired.
\fi

Define $B=b^{-1}(0)$ and $\hat{b}^{-1}(0) = C_{\{n\}}(B)$, and let
\begin{align}
E_0 &= \left\{x^{n-1} : B_{\{n\}}(x^{n-1},0)=\{0\}\right\}\\
E_1 &= \left\{x^{n-1} :B_{\{n\}}(x^{n-1},0)=\{1\}\right\},
\end{align}
where $B_{\{n\}}$ is defined by \eqref{eqn:Isection} with $\mathcal{I}=\{n\}$.
Define $\breve{b}(x^{n-1},x_n):= \hat{b}(x^{n-1},\neg x_n)$, where $\neg x_n$ is the negation of $x_n$.

By definition of $\hat{b}$ and $\breve{b}$, we have the identities
\begin{align*}
&\Pr\{{b}(X^n)=0|y^{n-1},0\} \\
&= \Pr\{\hat{b}(X^n)=0|y^{n-1},0\} -(1-2\alpha)\Pr\{X^{n-1}\in E_1 | y^{n-1}\} \\
&=\Pr\{\breve{b}(X^n)=0|y^{n-1}, 0\} +(1-2\alpha)\Pr\{X^{n-1}\in E_0 | y^{n-1}\}.
\end{align*}
Similar identities hold for $\Pr\{{b}(X^n)=0|y^{n-1},1\}$ with opposite signs on the $(1-2\alpha)$ terms, giving 
\begin{align}
&\Pr\{{b}(X^n)=0|y^{n}\} =\notag \\
&\quad \theta\Pr\{\hat{b}(X^n)=0|y^{n}\} + (1-\theta)\Pr\{\breve{b}(X^n)=0|y^{n}\} ,
\end{align}
where
\begin{align}
\theta = \frac{\Pr\{X^{n-1}\in E_0 | y^{n-1}\}}{\Pr\{X^{n-1}\in E_0 | y^{n-1}\}+\Pr\{X^{n-1}\in E_1 | y^{n-1}\}}.
\end{align}
Concavity of entropy implies that
\begin{align*}
\theta H(\hat{b}(X^n)|y^n) + (1-\theta ) H(\breve{b}(X^n)|y^n) \leq H({b}(X^n)|y^n).%
\end{align*}
Noting that $\theta$ only depends on $y^{n-1}$, we average both sides over $y_n\in\{0,1\}$ to obtain
\begin{align}
&\theta H(\hat{b}(X^n)|y^{n-1},Y_n) + (1-\theta ) H(\breve{b}(X^n)|y^{n-1},Y_n) \notag\\
&\quad\leq H({b}(X^n)|y^{n-1},Y_n). \label{cvxComb1}
\end{align}
By symmetry, $H(\hat{b}(X^n)|y^{n-1},Y_n)=H(\breve{b}(X^n)|y^{n-1},Y_n)$. Therefore,  averaging \eqref{cvxComb1} over all values of $y^{n-1}$, we can conclude that $H(\hat{b}(X^n)|Y^n) \leq H({b}(X^n)|Y^n)$. To complete the proof, we recall that $|\hat{b}^{-1}(0)| = |C_{\{n\}}(b^{-1}(0))|= |b^{-1}(0)|$.  Combined with the fact that $X^n$ is uniformly distributed on $\Omega_n$, this implies that $H(\hat{b}(X^n))=H({b}(X^n))$, as desired.
\end{proof}

We are now in a position to finish the proof of Theorem {\ref{thm:2CompressionHelps}}, which is similar to the proof of Lemma \ref{lem:1CompressionHelps}.

\begin{proof}[Proof of Theorem  {\ref{thm:2CompressionHelps}}]
We assume that $\mathcal{I}=\{n-1,n\}$, as all other cases follow by a permutation of coordinates.  To simplify notation, we write $B=b^{-1}(0)$.

By a repeated application of Lemma \ref{lem:1CompressionHelps}, we can assume that $B$ is $\{n-1\}$- and $\{n\}$-compressed.  Thus, the $\mathcal{I}$-sections $B_{\mathcal{I}}(x^n)$ can only be one of the following: $\emptyset$, $\{00\}$, $\{00,01\}$, $\{00,10\}$, $\{00,01, 10\}$, or $\{00,01,10,11\}$.  Note that all of these sets are initial segments of the lexicographical order on $\Omega_2$ except $\{00,10\}$.  Hence, we aim to transform $B$ so that $B_{\mathcal{I}}(x^n)\neq \{00,10\}$.  To this end, define
\begin{align}
G_0 &= \left\{x^{n} : x_{n-1}=x_n=0, B_{\mathcal{I}}(x^{n})=\{00,01\}\right\}\\
G_1 &= \left\{x^{n} : x_{n-1}=x_n=0, B_{\mathcal{I}}(x^{n})=\{00,10\}\right\}.
\end{align}
Now, define $\hat{b}$ by  $\hat{b}^{-1}(0) = C_{\mathcal{I}}(B)$ and the function $\breve{b}$ by permuting the last two coordinates:
\begin{align}
\breve{b}(x^{n-2},x_{n},x_{n-1}) = \hat{b}(x^{n-2},x_{n-1},x_n).\label{eqn:permSym}
\end{align}
It is relatively straightforward to show that
\begin{align}
&\Pr\{{b}(X^n)=0|y^{n}\} =\notag \\
&\quad \theta\Pr\{\hat{b}(X^n)=0|y^{n}\} + (1-\theta)\Pr\{\breve{b}(X^n)=0|y^{n}\}  ,
\end{align}
where
\begin{align}
\theta = \frac{\Pr\{X^{n-2}\in G_0 | y^{n-2}\}}{\Pr\{X^{n-2}\in G_0 | y^{n-2}\}+\Pr\{X^{n-2}\in G_1 | y^{n-2}\}}.
\end{align}
Concavity of entropy implies that
\begin{align*}
\theta H(\hat{b}(X^n)|y^n) + (1-\theta ) H(\breve{b}(X^n)|y^n) \leq H({b}(X^n)|y^n).%
\end{align*}
Noting that $\theta$ only depends on $y^{n-2}$, we average both sides over $y_{n-1},y_n$ to obtain
\begin{align}
&\theta H(\hat{b}(X^n)|y^{n-2},Y_{n-1}^n) + (1-\theta ) H(\breve{b}(X^n)|y^{n-2},Y_{n-1}^n) \notag\\
&\quad\leq H({b}(X^n)|y^{n-2},Y_{n-1}^n).\label{cvxComb}
\end{align}
Crucially, the symmetry \eqref{eqn:permSym} implies that
\begin{align}
H(\hat{b}(X^n)|y^{n-2},Y_{n-1}^n) = H(\breve{b}(X^n)|y^{n-2},Y_{n-1}^n).
\end{align}
Combining this with \eqref{cvxComb} and averaging over $y^{n-2}$ proves $H(\hat{b}(X^n)|Y^n)\leq H({b}(X^n)|Y^n)$.  Since $H(\hat{b}(X^n)) = H({b}(X^n))$, the proof is complete.
\end{proof}

\subsection{Proof of Theorem \ref{thm:algoVerify}}\label{sec:pseudoProof}

The proof of Theorem \ref{thm:algoVerify} requires the following lemmas.

\begin{lemma}\label{lem:functional}
In order to prove \eqref{eqn:fIneqT}, it is sufficient to consider $p\in[1/2,1]$.
\end{lemma}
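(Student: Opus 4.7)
The plan is to establish a halving recursion for $T_\alpha$ on $[0,1/2]$ and then iterate it to reduce any $p\in(0,1/2)$ to an argument in $[1/2,1]$ where \eqref{eqn:fIneqT} is hypothesized. The boundary case $p=0$ is immediate since $T_\alpha(0)=0=f(0)H(\alpha)$.

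First, I would derive the identity
\begin{align*}
T_\alpha(p) \;=\; p\,H(\alpha) + \tfrac{1}{2}\,T_\alpha(2p),\qquad p\in[0,1/2].
\end{align*}
For a dyadic $p\in(0,1/2]$, the unique lex function $b$ on $n$ inputs with $\Pr\{b(X^n)=0\}=p$ has $b^{-1}(0)=\{0\}\times L_{n-1}(2^n p)$, so $b(x^n)=0$ iff $x_1=0$ and $b'(x_2^n)=0$, where $b'$ is the lex function on $n-1$ inputs with $\Pr\{b'(X^{n-1})=0\}=2p$. Since the coordinates of $X^n$ are conditionally independent given $Y^n$, and $Y_1$ is independent of $Y_2^n$, one obtains $\Pr\{b(X^n)=0\mid Y^n\}=a(Y_1)\,g(Y_2^n)$, where $a(Y_1)=\Pr\{X_1=0\mid Y_1\}$ and $g(Y_2^n)=\Pr\{b'(X_2^n)=0\mid Y_2^n\}$. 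Applying the functional identity $f(xy)=y\,f(x)+x\,f(y)$ and computing the four expectations $\EE[a(Y_1)]=1/2$, $\EE[f(a(Y_1))]=H(\alpha)/2$, $\EE[g(Y_2^n)]=2p$, and $\EE[f(g(Y_2^n))]=T_\alpha(2p)$ produces the recursion. Hölder continuity of $T_\alpha$ then extends it to all $p\in[0,1/2]$.

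Next, iterating the recursion $k$ times on an input $p\le 1/2^k$ gives
\begin{align*}
T_\alpha(p) \;=\; kp\,H(\alpha) + \tfrac{1}{2^k}\,T_\alpha(2^k p).
\end{align*}
For any $p\in(0,1/2)$, I would pick the minimal $k\ge 1$ with $2^k p\in[1/2,1]$, apply the hypothesis $T_\alpha(2^k p)\ge f(2^k p)H(\alpha)$, and invoke the algebraic identity $\tfrac{1}{2^k}f(2^k p)=-p\log(2^k p)=-kp+f(p)$ to conclude
\begin{align*}
T_\alpha(p) \;\ge\; H(\alpha)\bigl[kp + \tfrac{1}{2^k}f(2^k p)\bigr] \;=\; f(p)\,H(\alpha),
\end{align*}
as required.

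The only nontrivial step is the derivation of the halving recursion, which hinges on the canalizing structure of a lex function with $p\le 1/2$ (namely, $x_1=0$ is forced whenever $b(x^n)=0$) together with the product form of the channel. Everything else collapses to a mechanical identity for $f(x)=-x\log x$.
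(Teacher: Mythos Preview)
Your proof is correct and follows essentially the same approach as the paper: both derive the halving recursion $T_\alpha(p)=pH(\alpha)+\tfrac{1}{2}T_\alpha(2p)$ from the product structure $\Pr\{b(X^n)=0\mid Y^n\}=\Pr\{X_1=0\mid Y_1\}\cdot\Pr\{b'(X_2^n)=0\mid Y_2^n\}$ together with $f(xy)=xf(y)+yf(x)$, and then use it to transfer the inequality from $2p$ to $p$. The paper packages the last step slightly more compactly as the identity $T_\alpha(p)-f(p)H(\alpha)=\tfrac{1}{2}\bigl[T_\alpha(2p)-f(2p)H(\alpha)\bigr]$, whereas you unroll the recursion $k$ times and apply $\tfrac{1}{2^k}f(2^kp)=f(p)-kp$; these are equivalent.
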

\begin{proof}
Suppose $b$ and $b'$ are both lex and satisfy\footnote{Any lex function with $\Pr\{b(X^{n})=0\}=k/2^n$ can be reduced to a lex function on $n-1$ inputs if $k$ is even.}
\begin{align}
p \triangleq \Pr\{b'(X^n)=0\} = \frac{1}{2}\Pr\{b(X^{n-1})=0\}.
\end{align}
We have the identities $f(pq)=pf(q)+qf(p)$ and 
\begin{align}
\Pr\{b'(X^n)=0|y^n\} = \Pr\{X_1=0|y_1\}\Pr\{b(X_2^n)=0|y_2^n\},\notag
\end{align}
which imply %
the relation: 
$2T_{\alpha}(p) = T_{\alpha}(2p) + 2 p H(\alpha)$.
It follows easily that
\begin{align}
&\frac{1}{2}\Big[ T_{\alpha}(2p)-f(2p)H(\alpha)\Big] = \Big[ T_{\alpha}(p)-f(p)H(\alpha)\Big].
\end{align}
Thus, the claim is proved.
\end{proof}
Although $T_{\alpha}(p)$ is not concave, we are able to prove a pseudo-concavity characteristic of $T_{\alpha}(p)$.  This is exploited in the following claim.
\begin{lemma}\label{lem:pseudoConcave}
For $k\leq2^n$, consider the lex functions $b_{-}$ and $b_{+}$ which satisfy
\begin{align}
p_{-} \triangleq \Pr\{b_{-}(X^n)=0\}&=\frac{k}{2^n}  \label{eqn:pm}\\
p_{+} \triangleq \Pr\{b_{+}(X^n)=0\}&=\frac{k+1}{2^n} .\label{eqn:pp}
\end{align}
For $\theta\in[0,1]$, the following  inequality holds:
\begin{align*}
&T_{\alpha}\left(\theta p_- + (1-\theta)p_+\right)\geq \theta T_{\alpha}(p_-) + (1-\theta)T_{\alpha}(p_+).
\end{align*}
\end{lemma}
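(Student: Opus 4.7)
The plan is to establish the chord inequality for a dense set of dyadic $\theta$ and then extend by the stated continuity of $T_\alpha$. The engine will be concavity of $f(x) = -x\log x$ on $[0,1]$, applied after identifying a nested decomposition of the lex sets on higher-resolution input spaces.

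First, I would parametrize any dyadic $p \in [p_-, p_+]$ with denominator $2^m$ ($m \geq n$) as $p = k/2^n + s/2^m$ for some integer $s \in \{0, 1, \dots, 2^{m-n}\}$; this is exactly $p = \theta p_- + (1-\theta) p_+$ with $\theta = 1 - s/2^{m-n}$. The unique lex function $b$ on $m$ inputs with $\Pr\{b(X^m) = 0\} = p$ has $b^{-1}(0) = L_m(k \cdot 2^{m-n} + s)$. Writing $x^m = (u^n, v^{m-n})$ and comparing integer representations bit by bit, this preimage decomposes disjointly as
\[
L_m(k \cdot 2^{m-n} + s) \;=\; \bigl(L_n(k) \times \Omega_{m-n}\bigr) \,\cup\, \bigl(\{u^*\} \times L_{m-n}(s)\bigr),
\]
where $u^* \in \Omega_n$ is the $(k+1)$-th element in lex order on $\Omega_n$.

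Next, define $q(y^n) \triangleq \Pr\{X^n \in L_n(k) | y^n\}$, $r(y^n) \triangleq \Pr\{X^n = u^* | y^n\}$, and $h_s(y_{n+1}^m) \triangleq \Pr\{X_{n+1}^m \in L_{m-n}(s) | y_{n+1}^m\}$. The memoryless structure of the BSC together with the decomposition above gives
\[
\Pr\{b(X^m) = 0 | Y^m\} \;=\; q(Y^n) + r(Y^n)\, h_s(Y_{n+1}^m) \;=\; (1 - h_s)\, q + h_s (q + r),
\]
exhibiting the conditional probability as a convex combination of $q$ and $q + r$, whose expected $f$-values are precisely $T_\alpha(p_-)$ and $T_\alpha(p_+)$. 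Pointwise concavity of $f$ gives $f(q + r h_s) \geq (1 - h_s) f(q) + h_s f(q + r)$; taking expectations over $Y^m$, using independence of $Y^n$ and $Y_{n+1}^m$ (since $Y^m$ is i.i.d.\ Bernoulli$(1/2)$), together with the identity $\EE[h_s(Y_{n+1}^m)] = s/2^{m-n} = 1 - \theta$, yields $T_\alpha(p) \geq \theta T_\alpha(p_-) + (1-\theta) T_\alpha(p_+)$ at every dyadic $\theta$. Continuity of $T_\alpha$ then extends the inequality to all $\theta \in [0,1]$.

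I expect the main conceptual step to be spotting the nested decomposition of $L_m(k \cdot 2^{m-n} + s)$; once it is in hand, the conditional probability factors cleanly into $q + r h_s$, and the lemma reduces to a single pointwise application of Jensen's inequality. There is no further analytic obstacle, and the oscillatory behavior of $T_\alpha$ noted elsewhere in the paper plays no role within a single cell $[k/2^n, (k+1)/2^n]$.
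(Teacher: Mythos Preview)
Your proof is correct and uses essentially the same mechanism as the paper: the nested lex decomposition combined with concavity of $f$ and continuity of $T_\alpha$. The only difference is cosmetic---the paper establishes just the midpoint case $m=n+1$, $s=1$ and then appeals to induction over dyadic levels, whereas you go directly to an arbitrary level $m$ and obtain all dyadic $\theta$ in one shot; your version is slightly cleaner but not a different idea.
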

\begin{proof}
First, observe that it suffices to prove
\begin{align}
&T_{\alpha}\left(\frac{p_-+p_+}{2}\right)\geq \frac{1}{2}\Big[ T_{\alpha}(p_-) + T_{\alpha}(p_+)\Big].  \label{eqn:desiredStep}
\end{align}
Indeed, from \eqref{eqn:desiredStep} an inductive argument proves the lemma when $\theta p_- + (1-\theta)p_+$ is restricted to the set of dyadic rationals.  Then, recalling the continuity of $T_{\alpha}(\cdot)$ on $[0,1]$ completes the proof.

To this end, let $b$ be the unique lex function on $n+1$ inputs which satisfies
\begin{align}
&\Pr\{b(X^{n+1})=0\}=\frac{2k+1}{2^{n+1}}=\frac{1}{2}\Big[p_- +p_+ \Big]. \label{p+p-}
\end{align}
By construction, we have
\begin{align}
&\Pr\{b(X^{n+1})=0|Y^{n+1}\} \notag\\
&~~= \Pr\{X_{n+1}=0|Y_{n+1}\}\Pr\{b_{+}(X_1^{n})=0|Y_1^{n}\} \notag\\
&~~~~~+\Pr\{X_{n+1}=1|Y_{n+1}\}\Pr\{b_{-}(X_1^{n})=0|Y_1^{n}\}.\label{eqn:btopbot}
\end{align}
Combining  \eqref{p+p-} and \eqref{eqn:btopbot} with the fact that $f(x)$ is concave, we have the desired inequality
\begin{align}
&T_{\alpha}\left(\frac{p_-+p_+}{2}\right)\notag\\
&\geq \EE_{Y^{n+1}} \Pr\{X_{n+1}=0|Y_{n+1}\}f \left(\Pr\{b_{+}(X_1^{n})=0|Y_1^{n}\}\right)\notag\\
&~~+\EE_{Y^{n+1}} \Pr\{X_{n+1}=1|Y_{n+1}\}f \left(\Pr\{b_{-}(X_1^{n})=0|Y_1^{n}\}\right)\notag\\
&=\frac{1}{2}\Big[ T_{\alpha}(p_-) + T_{\alpha}(p_+)\Big].
\end{align}
\end{proof}

We are now in a position to prove Theorem \ref{thm:algoVerify}.

\begin{proof}[Proof of Theorem \ref{thm:algoVerify}]
Lemmas \ref{lem:functional} and \ref{lem:pseudoConcave} imply that, in order to prove \eqref{eqn:fIneqT},  it is sufficient to construct a piecewise linear function $g:[1/2,1]\rightarrow [0,\infty)$ satisfying the following properties:
\begin{enumerate}
\item Each segment of $g$ is a chord connecting the points $(p_-,T_{\alpha}(p_-))$ and  $(p_+,T_{\alpha}(p_+))$, where $p_-$ and $p_+$ are of the form %
\begin{align}
&p_{-}=\frac{k}{2^n},
&p_{+} =\frac{k+1}{2^n}.
\end{align}
for some  integers $k,n$.
\item For $p\in[1/2,1]$, $g(p)\geq f(p) H(\alpha)$.
\end{enumerate}
By definition, Algorithm \ref{alg:testIneq} terminates only if it constructs such a function.
\end{proof}

\section{Concluding Remarks} \label{sec:conclusion}
Although Conjecture \ref{conj:Ib_Yn} remains open, we have provided substantial evidence in support of its validity.  Indeed, our results suggest that Conjecture \ref{conj:comp} is valid and we have an algorithmic proof establishing Conjecture \ref{conj:bLexLB} for any given value of $\alpha$.  Any complete proof of Conjectures \ref{conj:Ib_Yn} or  \ref{conj:comp} would be of significant interest, since it would likely require new methods which may be applicable in information theory and elsewhere (e.g., in proving discrete isoperimetric inequalities).

We leave the reader with a  weak form of Conjecture \ref{conj:Ib_Yn} which could provide insight.  For Boolean functions $b,b'$, does it hold that $I(b(X^n);b'(Y^n))\leq 1-H(\alpha)$?  ~While this problem appears difficult in general, it is a simple exercise to show this is true when $b(X^n)$ and $b'(Y^n)$ are both Bernoulli(1/2).  Intuitively, this should be the case for $b,b'$ which maximize $I(b(X^n);b'(Y^n))$.

\section*{Acknowledgment}
The authors are grateful to Abbas El Gamal, Chandra Nair, and Yeow-Khiang Chia for many helpful discussions.

This work is supported in part by the Air Force grant FA9550-10-1-0124 and by the NSF Center for Science of Information
under grant agreement CCF-0939370.

\bibliographystyle{IEEEtran.bst}
\bibliography{myBib}

\end{document}